  \newcommand{\dist}[1]{\mathcal{D}(#1)}
\newcommand{\reals}{\mathbf{R}}
\newtheorem{definition}{Definition}
\newtheorem{proposition}{Proposition}
\newtheorem{problem}{Problem}
\newtheorem{example}{Example}
\newtheorem{remark}{Remark}
  \acrodef{mdp}[MDP]{Markov decision process} 
 \newcommand{\abs}[1]{\lvert #1 \rvert}
 \DeclareMathOperator*{\optmin}{\textrm{minimize}}
 \DeclareMathOperator*{\optst}{\textrm{subject to}}
 \newcommand{\argmax}{\mathop{\mathrm{argmax}}}
  \newcommand{\Expect}{\mathbb{E}}
 \newcommand{\probs}{\mathbf{P}}
\acrodef{mdp}[MDP]{Markov decision process}
\acrodef{hmm}[HMM]{hidden Markov model}
\acrodef{pomdp}[POMDP]{partially observable Markov decision process}
\newcommand{\Qstar}{Q^{\star}}
 \newcommand{\Ro}{\bar{R}}  
\newcommand{\matA}{{\mathbf{A}}}
\newcommand{\matT}{{\mathbf{T}}}
\newcommand{\matO}{{\mathbf{O}}}
\title{Active Inference through Incentive Design in Markov Decision Processes}
\author{
Xinyi Wei\thanks{Equal contribution.}$^1$\and
Chongyang Shi\footnotemark[1]$^1$\and
Shuo Han$^2$\and
Ahmed H. Hemida$^3$\and
Charles A. Kamhoua$^3$\and 
Jie Fu$^1$ \\
\affiliations
$^1$University of Florida, Gainesville, FL \\
$^2$University of Illinois, Chicago.
$^3$DEVCOM Army Research Laboratory\\  
\emails
\{weixinyi, c.shi, fujie\}@ufl.edu,
hanshuo@uic.edu
\{ahmed.h.hemida.ctr,
charles.a.kamhoua.civ\}@army.mil
}
\begin{document}

\maketitle

\begin{abstract}
   We present a method for active inference with partial observations in stochastic systems through incentive design, also known as the leader-follower game. Consider a leader agent who aims to infer a follower agent's type given a finite set of possible types. Different types of followers differ in either the 
dynamical model, the reward function, or both. We assume the leader can partially observe a follower's behavior in the stochastic system modeled as a Markov decision process, in which the follower takes an optimal policy to maximize a total reward. To improve inference accuracy and efficiency, the leader can offer side payments (incentives) to the followers such that different types of them, under the incentive design, can exhibit diverging behaviors that facilitate the leader's inference task. We show the problem of active inference through incentive design can be formulated as a special class of leader-follower games, where the leader's objective is to balance the information gain and cost of incentive design. The information gain is measured by the entropy of the estimated follower's type given partial observations. Furthermore, we demonstrate that this problem can be solved by reducing a single-level optimization through softmax temporal consistency between followers' policies and value functions. This reduction allows us to develop an efficient gradient-based algorithm. We utilize observable operators in the \ac{hmm} to compute the necessary gradients and demonstrate the effectiveness of our approach through experiments in stochastic grid world environments.
\end{abstract}

\section{Introduction}
Incentive design~\citep{boltonContractTheory2005}, is also referred to as the \textit{principal-agent} or \textit{leader-follower} game. 
~\cite{ho1981information} address applications where a planner or leader aims to optimize system performance while anticipating and accounting for the active interactions of multiple users or followers. It is applied in various domains such as economic market~\citep{myerson1981optimal,williams2011persistent, easley2015behavioral}, smart city~\citep{mei2017game,kazhamiakin2015using}, and smart grid~\citep{braithwait2006incentives,alquthami2021incentive}.
 
  
In the classical problem formulation, the leader and followers have respective payoff/reward functions and aim to optimize their own rewards.  We focus on a specific class of incentive design where the leader is to infer the followers’ \emph{type} (intentions, rewards, or dynamic model) by partially observing the follower’s active interactions with the system.  We refer to this as the problem of \emph{active inference through incentive design}.  

In particular, consider a finite hypothesis set of possible followers, where each follower’s planning problem is modeled as a \ac{mdp} with the objective of maximizing total discounted rewards. Followers’ \ac{mdp}s may differ in their transition dynamics, discount factors, or reward functions. At each time step, the leader observes the activity of one follower, selected from the hypothesis set, through imperfect and noisy observations. The leader’s goal is to identify which follower is currently interacting with the system.
To improve inference accuracy within a finite time horizon, the leader strategically designs an incentive policy to offer side payments (additional rewards)  within the environment, influencing the followers' best responses. These incentives, however, incur a direct cost to the leader's own payoff. With only partial, noisy observations of the follower's state sequence, the leader calibrates the incentive policy to minimize uncertainty about the follower's type, while balancing the trade-off between inference accuracy and the cost of providing incentives.
\paragraph{Related work.}
 
Incentive design has been studied in different contexts, particularly in federated learning for IoT environments and mobile ad hoc networks (MANETs). Zhan et al.~\citep{zhan2020learning} used deep reinforcement learning for optimal pricing strategies in federated learning systems, while Li et al.~\citep{li2011game} applied game theory to study cooperation incentives in MANETs.
In addition, \cite{ratliff2020adaptive} addresses a principal-agent problem with multiple agents of unknown types. The principal optimizes an objective function which depends on the data from strategic decision makers (agents). The followers' decision-making processes are categorized into Nash equilibrium strategies and myopic update rules. They develop an algorithm that the principal can employ to learn the agents’ decision-making processes while simultaneously designing incentives to change their response to one that is more desirable. Adaptive incentive design in  \citep{ma2024adaptive} introduces gradient-ascent algorithms to compute the leader’s optimal incentive strategy, despite the lack of knowledge about the follower’s reward function. 
While adaptive incentive design focuses on dynamically adapting the leader’s policy based on the follower’s response, our work examines a different problem: incentive design that supports inference tasks under partial observability. Specifically, we assume the leader has a prior distribution over potential follower types and aims to reduce uncertainty in the posterior distribution of these types based on limited observations. To achieve this, we propose an information-theoretic objective for the leader, rather than defining the leader’s value in terms of cumulative discounted rewards.   Since information-theoretic measures cannot be directly expressed as cumulative rewards, we develop novel solutions  for this class of incentive design problems. 



Information-theoretic metrics are also widely used in various active inference problems. \cite{egorovTargetSurveillanceAdversarial2016} and \cite{Araya2010Neurips} formulate a \ac{pomdp} with a reward function dependent on the agent's information state or belief. For instance, in target surveillance, a patrolling team is rewarded for reducing uncertainty in its belief about an intruder’s position or state. Similarly, in intent inference, \cite{Shendeep2019} utilized the negative entropy of the belief over an opponent’s intent as a reward, maximizing the total reward to enhance active inference. Also, \cite{Shi2024} use Shannon conditional entropy as an information leakage measure to solve the active inference problem in which an agent can actively query sensors to infer an unknown variable in a hidden Markov model.
To our knowledge, information-theoretic objectives have not been considered for the leader-follower games. Existing solutions to active inference are for motion planning of a single agent or a team of agents, and cannot be extended to solve the incentive design problems in the leader-follower game.

Our problem formulation and methods are also applicable for intent inference, which is a key research topic of AI alignment. Inferring humans' intentions or capabilities can ensure that AI systems tailor the decisions for the specific user group and avoid unintended or harmful behaviors~\citep{leike2018scalable,soares2014aligning}.
\paragraph{Our contribution.}
Our research advances active inference through the following key contributions:
\begin{itemize}
    \item We introduce a novel incentive design framework for active inference, employing conditional entropy to quantify the leader's uncertainty about the follower's type from the leader's partial observations. 
    \item  We show the problem of incentive design can be formulated as a bi-level optimization problem, using an augmented-state hidden Markov model constructed from the set of policies of different potential follower types. 
    When each follower follows an optimal entropy-regularized policy, it is possible to transform the bi-level problem into a single-level optimization, enabling the solution with gradient descent methods.
    \item We consider that the leader has only partial and noisy observations of the followers’ trajectories, and we develop efficient methods for computing the gradient terms. These methods leverage observable operators \citep{jaeger2000observableoperator} within the framework of \ac{hmm}, enabling more accurate and computationally efficient gradient computation.
    \item Finally, we demonstrate the accuracy and effectiveness of our proposed methods through experimental validation. In this research, specifically, we minimize uncertainty when all types of followers provide their best responses to given side payments, using gradient descent methods.
\end{itemize}
 
\section{Preliminaries and Problem formulation}

\noindent \textbf{Notation.} The set of real numbers is denoted by $\reals$. Random variables will be denoted by capital letters and their realizations by lowercase letters ($X$ and $x$).  A sequence of random variables and their realizations with length $T$ are denoted as $X_{0:T}$ and $x_{0:T}$. The notation $x_i$ refers to the $i$-th component of a vector $x \in \reals^{n}$ or the $i$-th element of a sequence $x_0, x_1, \ldots$, which will be clarified by the context. 
Given a finite set $\mathcal{S}$, let $\dist{\mathcal{S}}$ be the set of all probability distributions over $\mathcal{S}$. The set $\mathcal{S}^{T}$ denotes the set of sequences with length $T$ composed of elements from $\mathcal{S}$, and $\mathcal{S}^\ast$ denotes the set of all finite sequences generated from $\mathcal{S}$.  

This active inference problem involves a leader-follower framework where a leader can offer side payments to supplement the followers' original rewards, and the followers always take the best responses. The leader can partially observe the followers' behavior and aims to minimize uncertainty about their types, balancing against the cost of side payments.

\paragraph{The follower's model.} A single-agent \ac{mdp} is defined as:

\[
    M = (S,A, P, \mu, \gamma, \bar R),
\]where $S$ is a set of states, $A$ is a set of actions, $P\colon S \times A \rightarrow \dist{S}$ is a probabilistic transition function such that $P({s}'|{s}, {a})$ is the probability of reaching state ${s}'$ given action ${a}$ being taken at state ${s}$, $\mu \in \dist{S}$ is an initial state distribution.
The original reward function for the agent without any side payment is $\bar R \colon S \times A \rightarrow \reals$ where $\bar R(s, a)$ is the reward received by the follower for taking action $a$ in state $s$.



\paragraph{Different types of followers.}
We consider the interaction between a leader and a finite set $\cal T$ of followers with different types. 
For each type $i\in \mathcal{T}$,    follower $i$'s planning problem is modeled by an \ac{mdp}  $M_i = (S,A, P_i, \mu_i, \gamma_i, \Ro_i)  $. Different types have different initial state distributions, reward functions, discounting factors, or transition dynamics. Without loss of generality, we assume the followers have the same state and action spaces.

\paragraph{Incentives as side payments.} 
The leader can allocate side payments to the environment. A side payment allocation is a function: $x \colon S \times A \to \reals_+$, hereafter referred to as the \emph{side payment}. Specifically, $x(s, a)$ is the additional non-negative reward that the leader offers to follower $i$ when follower $i$ takes action $a$ in state $s$, for each $i\in \mathcal{T}$. Let $x$ be a vector of dimension $S \times A$ representing side payments for each state-action pair, with its domain denoted as $\mathcal{X}$.

Due to the independent dynamics and independent reward functions with side payments, for each follower $i\in \mathcal{T}$, 
given a  side payment   $x$, follower $i$'s modified  reward function $R_i( x)$  is defined as follows: For all $(s,a)\in S\times A$, 
    \begin{equation}
        \label{eq:sidepayment-reward-i}
        {R}_i(s, a; 
        x ) = \Ro_i(s,a)+  x (s,a) .  
    \end{equation}
For each $i \in \mathcal{T}$,  follower $i$'s planning problem with side payment $x$ is thus modeled as the following \ac{mdp}:
\[M_i(x) = (S,A, P_i, \mu_i, \gamma_i, R_i( x)).\]

Given the \ac{mdp} $M_i(x)$ and a Markov policy $\pi: S\rightarrow \dist{A}$, let $V_i(\mu_i, R_i(x),\pi)$ be the value function evaluated for  policy $\pi$. 

\paragraph{Leader's partial observation}
Given a follower's \ac{mdp} $M_i(x)$ and a set $\mathcal{O}$ of observations,  the leader's observation function is defined by $E_i: S\rightarrow \dist{\mathcal{O}}$. That is, the leader only partially observes the state of the follower. Note that it is easy to extend this observation function to allow partial observations of both states and actions, by augmenting the follower's states with actions taken.


An informal problem statement is given  below.
\begin{problem}
Consider that the leader's objective is to estimate the type of a follower based on a finite observation sequence. The leader can incentivize diverging behaviors from different followers by providing side payments, which come at a cost to the leader. How should the leader design a side payment strategy to balance the estimation-utility trade-off? That is, how can the leader maximize the estimation accuracy while minimizing the total cost of side payments?
\end{problem}

\begin{remark}
   This problem formulation is motivated by AI-alignment problems, where the AI agent is to tailor some decision recommendations to the user based on the inferred intention or capability of the user. For example, ride-sharing platforms continuously monitor driver behaviors to estimate driver preferences and capabilities. By designing strategic incentives such as surge pricing and completion bonuses, these platforms can improve the accuracy and efficiency of the inference, and thus provide more personalized services, such as route recommendations and rider-driver matching. Other applications are related to AI-based security and intrusion detection. By strategically allocating sensors and decoys, a security system can manipulate users' perceptual reward functions to elicit divergent behaviors between attackers and normal users.
\end{remark}

\section{The incentive design problem formulation}
We first study how to measure the leader's uncertainty about the current follower's type after a finite sequence of observations, when the followers' policies are known and the leader observes a follower's behavior through imperfect observations. 

\begin{definition}
 Given a collection of policies, referred to as a \emph{policy profile}, $\bm{\pi} = [\pi_i]_{i\in \mathcal{T}}$ of followers, and the leader's observation functions for followers $\{E_i, i \in \mathcal{T}\}$. The following \ac{hmm} can be constructed:
\[
\mathcal{M}(\bm{\pi})= \langle  
S\times \mathcal{T}, \mathbf{P}, {\cal{O}}, {\cal{E}}, \mu_0 \rangle
\]

\begin{itemize}
    \item $S\times \mathcal{T} $ is the augmented state space. Each state $(s,i)$ includes a state in the follower's \ac{mdp} and a type of the follower.
    \item $\mathbf{P}_{\bm{\pi}}: S\times \mathcal{T}\rightarrow \Delta(S\times \mathcal{T})$ is defined by 
    \begin{multline*}
    \mathbf{P}_{\bm{\pi}}((s', j)| (s,i)) \\
    =\begin{cases}
        \sum_{a\in A} P_i(s'|s, a)\pi_i ( s, a ) &\text{if} \ i=j\\
        0& \text{otherwise.}
    \end{cases}
    \end{multline*}
    In other words, at the state $(s,i)$, follower $i$ will take an action by following his policy $\pi_i ( s  ) $, and the type does not change.
    \item $\mu_0 \in \Delta(S\times \mathcal{T})$ is the initial state distribution. For all $(s, i) \in S\times \mathcal{T}$, $$\mu_0(s,i) =  \mu_i(s)\probs(\mathbf{T}=i).$$
    where $\bm{T}$ is a random variable representing the estimated type of the follower. $\probs(\bm{T})$ is the prior distribution over possible types.
    \item $\cal O$ is a finite set of observations.
    \item ${\cal{E}}: S\times \mathcal{T}\rightarrow \dist{\cal O}$ is the observation function, defined by 
    $
    {\cal E}(o|(s,i)) = E_i(o|s) $ is the probability of observing $o$ when agent $i$ at the state $s$.
 \end{itemize}
\end{definition}

Let $O_t$ denote a random variable representing the observation at time $t$, and let $o_t$ be a specific realization of this random variable. We denote the posterior estimate of the type $\bm{T}$ given an observation $o_{0:T}$ as $\probs_{\bm{\pi}}(\bm{T}|O_{0:T}= o_{0:T})$.

Next, we define the planning objective—Shannon conditional entropy. Information-theoretic metrics are widely used as planning objectives, with entropy measures being particularly common for quantifying information leakage. These entropy measures have been extensively studied in channel design problems, which can be interpreted as one-step decision-making problems \citep{khouzani2017leakage}.

\begin{definition} 
Let $Y\coloneqq O_{0:T}$.
The conditional Shannon entropy of the agent's type given the observations is defined by,
\begin{equation}
\begin{aligned}
H (\bm{T}\mid Y, 
\mathcal{M}(\bm{\pi}) ) &= \sum_{y\in \mathcal{Y}} \probs_{\bm{\pi}}(y)H(\bm{T}|Y=y, 
\mathcal{M}(\bm{\pi})) \\ & = -  \sum_{i\in \mathcal{T} } \sum_{y\in \mathcal{Y}} \probs_{\bm{\pi}}(i,y)\log \probs_{\bm{\pi}}(i|y),
\end{aligned}
\end{equation} where $y$ is a sample observation sequence, and $\mathcal{Y}$ is a set of all finite observation sequences of length $T$.
\end{definition}

Given different policy profiles $\bm{\pi}$ and $\bm{\pi}'$, the entropies $H (\bm{T}\mid O_{0:T}, \bm{\pi} ) $ and   $H (\bm{T}\mid O_{0:T}, \bm{\pi}') $ can be different. 

For the leader's goal of inferring the follower's type, a policy profile  makes the leader's task easier   if it has a lower conditional entropy.
Based on this relation, we formulate the following incentive design problem where
the objective of the leader is to allocate the side payments ${x}$ to improve the accuracy and efficiency of inferring the type of the follower.

\begin{problem}[Incentive design for intention inference] 
   \label{prob:bilevel}
  Assuming the followers always provide the best responses, the leader's incentive design for intention inference problem is the following bi-level optimization problem:
  \begin{equation}
   \label{eq:bi-level}
	\begin{alignedat}{2}
      	& \optmin_{{x} \in\mathcal{X}} &  &  H(\bm{T}| O_{0:T}, \mathcal{M}(\bm{\pi}^\star({x}))) + h({x}) \\\
		& \optst_{\phantom{x\in \reals_{+}^{\abs{S\times A}},\ \pi^\star}} &\quad  & \pi_i^{\star} (x)\in \argmax_{\pi\in \Pi}    V_i( \mu_{ i}, R_i(x) ,\pi ), \forall i \in \mathcal{T}.
 	\end{alignedat}
  \end{equation}
where $\pi^\star(x ) = [\pi_i^\star(x)]_{i \in \mathcal{T}}$ is a policy profile consisting of the best responses of followers, and $h \colon \cal X \to \reals_{+}$ is a side payment cost function, which is differentiable.

\end{problem}


\subsection{Reduction to single-level optimization}
 We show how to leverage entropy-regularized MDP ~\citep{nachum2017bridging} to reduce the bi-level optimization problem equation~\eqref{eq:bi-level} to a single-level optimization problem. The following reviews the entropy-regularized MDP solution for a single agent, which is the same for all followers. We omit the index of a follower for clarity.


\paragraph{Entropy-regularized optimal value/policy.}
The optimal value function $V^\star$ of the entropy-regularized \ac{mdp} with respect to the reward function $R$ satisfies the following entropy-regularized Bellman equation~\citep{nachum2017bridging}:  
\begin{multline}
\label{eq:soft-max}
    V ^\star(s, R ) =
    \tau \log \sum_{a \in A}\exp \{(R (s,a ) \\ +  
     \gamma  \Expect_{s'\sim P(\cdot\mid s,a)} V ^\star(s', R  ))/\tau\}, \ \forall s\in S.
\end{multline}Note that, as   $\tau$ approaches $0$, equation~\eqref{eq:soft-max} recovers the standard  optimal Bellman equation.
Let $\Qstar(R) \colon S\times A\rightarrow \reals$ 
    be the optimal state-action value function (also called Q-function) of the entropy-regularized \ac{mdp} under reward $R$: 
\[
\Qstar(s,a, R )= R (s,a)+\Expect_{s'\sim P(\cdot |s,a)} V_2^\star(s', R ).
\]
For a fixed temperature parameter $\tau$, the optimal policy of the entropy-regularized \ac{mdp}  is uniquely defined by
    \begin{equation}
        \label{eq:policy-softmax}
    \pi^\star(s,a) = \frac{\exp(\Qstar(s,a, R)/\tau) }{\sum_{a'\in A} \exp(\Qstar(s,a')/\tau)}.
    \end{equation}

Then, the optimal policy of the entropy-regularized \ac{mdp} can be written succinctly as $\pi_{\Qstar(R )}$, where $\Qstar(R )$ is viewed as a vector in $\reals^{|S \times A|}$. 

 In the following, we use $V^\star (R_i)$,  $Q^\star (R_i)$,  $\pi_i^\star$, $\bm{\pi}^\star$ (resp. $V^\star (R_i(x))$,  $Q^\star (R_i(x))$,  $\pi_i^\star(x)$, $\bm{\pi}^\star(x)$)  for optimal entropy-regularized value function, Q-value function, policy, and policy profile with respect to follower $i$'s reward $R_i$ (resp. modified reward $R_i(x)$ with side payment $x$).

Due to the relation between the  optimal policy and the optimal state-action value function of the entropy-regularized \ac{mdp} given by equation~\eqref{eq:policy-softmax}, the lower-level problem in the bilevel optimization problem in~equation~\eqref{eq:bi-level} has a unique solution $\pi_{\Qstar(R_i(x))}$.
   We can reduce the problem \ref{prob:bilevel} to the following single-level optimization problem.

     \begin{equation}
     \label{eq:single-level}
         \optmin_{{x} \in\mathcal{X}}   H(\bm{T}| O_{0:T}, \mathcal{M}(\bm{\pi}^\star(x) ) + h({x}).
     \end{equation}
     where 
$\bm{\pi}^\star(x) =[\pi_{\Qstar(R_i(x))}]_{i \in \mathcal{T}}$ is the policy profile and each $\pi_{\Qstar(R_i(x))}$ is the entropy-regularized optimal policy:
\[\pi_{\Qstar(R_i(x))} =  \arg\max_{\pi_i }V(\mu_i, R_i(x), \pi_i), \forall i \in \mathcal{T}.
\]

\subsection{Incentive design with gradient descent}

 For convenience, define the \emph{softmax policy} $\pi_\theta$ as 
\begin{equation}
\label{eq:policy-softmax-generic}
\pi_\theta(s,a) = \frac{\exp(\theta_{s,a}/\tau)}{\sum_{a' \in A} \exp(\theta_{s,a'}/\tau)}, \quad \theta \in \reals^{|S \times A|}.
\end{equation}

\paragraph{Notation.} Given this parametrization, we will replace the notation of policy $\pi$ (or policy profile $\bm{\pi}$) with policy parameter $\theta$ (or policy parameter profile $\bm{\theta}\triangleq [\theta_i]_{i \in \mathcal{T}}$). Thus, $\mathcal{M}(\bm{\pi}_{\bm{\theta}}) \triangleq \mathcal{M}(\bm{\theta})$ and  $\probs_{\bm{\pi}_{\bm{\theta}}}\triangleq \probs_{\bm{\theta}}$. Similar to the notation of $\bm{\theta}$, We denote $\bm{\Qstar}(\bm{R}(x) )  \triangleq [\Qstar(R_i(x))]_{i\in \mathcal{T}}$, and $\bm{R}(x)  \triangleq [R_i(x)]_{i\in \mathcal{T}}$.

Because the entropy-regularized optimal policy $\pi^\star_i(x)$ is an implicit function of $x$, we consider using the gradient-descent method to find a stationary point    for the objective function in equation~\eqref{eq:single-level}. 

Let's define $J_1(\bm{\theta}) = H(\bm{T}|Y;\mathcal{M}(\bm{\theta}))$ (recall $Y=O_{0:T}$), note that $\bm{\theta}$ should be $\bm{\Qstar}(\bm{R}(x) ) $ in  the equation~\eqref{eq:single-level}. Let $$J(x) \coloneqq J_1(\bm{\Qstar}(\bm{R}(x) ) ) + h(x).$$ 
Following the chain rule, the derivative of $J$ with respect to $x$ is given by
     \begin{equation}
     \label{eq:total-derivative}
D J(x) = DJ_1(\bm{\Qstar}(\bm{R}(x) )) \cdot D\bm{\Qstar}(\bm{R}(x) )  \cdot D \bm{R}(x)  + D h(x).
     \end{equation}
    Because $h$ is given, $Dh(x)$ can be computed analytically. Similarly, $D\bm{R}(x)$ can be computed analytically given the function $R_i(x)$, for each $i \in \mathcal{T}$. 
    
    In the following, we show how to compute $DJ_1(\bm{\Qstar}(\bm{R}(x)))$ and $D {\bm{\Qstar}(\bm{R}(x))  }$ respectively.
    
\paragraph{Computing $DJ_1(\bm{\Qstar}(\bm{R}(x)))$.} Let $\bm{\theta} = \bm{\Qstar}(\bm{R}(x))$. Computing  $DJ_1(\bm{\theta})$ involves computing  the gradient of the conditional entropy $H(\bm{T}|Y;\mathcal{M}(\bm{\theta}))$ w.r.t. the  parameter $\bm{\theta}$ of policy profile, i.e. $DJ_1(\bm{\theta})=\nabla_{\bm{\theta}}H(\bm{T}|Y;\bm{\theta})^\intercal$.

    By using a trick that $\nabla_{\bm{\theta}} \probs_{\bm{\theta}}(y) = \probs_{\bm{\theta}}(y) \nabla_{\bm{\theta}} \log \probs_{\bm{\theta}}(y)$ and the property of conditional probability, we have
\begin{equation}
\begin{aligned}
\label{eq:HMM_gradient_entropy}
 &\nabla_{\bm{\theta}} H(\bm{T}|Y;\bm{\theta}) \\
= & - \sum_{y \in \mathcal{O}^T} \sum_{i \in \mathcal{T}} \Big[\nabla_{\bm{\theta}} \probs_{\bm{\theta}}(i, y) \log_2 \probs_{\bm{\theta}}(i | y) \\
&+  \probs_{\bm{\theta}}(i, y) \nabla_{\bm{\theta}}  \log_2 \probs_{\bm{\theta}}(i | y)\Big] \\
= & - \sum_{y \in \mathcal{O}^T} \sum_{i \in \mathcal{T}} \Big[\probs_{\bm{\theta}}(y) \nabla_{\bm{\theta}} \probs_{\bm{\theta}}(i| y) \log_2 \probs_{\bm{\theta}}(i | y) + \\ 
& \probs_{\bm{\theta}}(i|y) \nabla_{\bm{\theta}} \probs_{\bm{\theta}}(y) \log_2 \probs_{\bm{\theta}}(i | y) 
+   \probs_{\bm{\theta}}(y)\frac{\nabla_{\bm{\theta}} \probs_{\bm{\theta}}(i | y)}{\ln 2}\Big].
\end{aligned}
\end{equation}


Given a prior distribution $\probs(\bm{T})$, we have $$\probs_{\bm{\theta}}(y) = \sum_{i \in \mathcal{T}}\probs(\bm{T}=i)\probs_{\bm{\theta}}(y|\bm{T}=i),$$
where $\probs_{\bm{\theta}}(y|\bm{T}=i) = \sum_{s\in S}\probs_{\bm{\theta}}(y|s,i)\mu_0(s,i) $ is the probability of generating the observation  sequence $y$ given the follower $i$ following a policy parameterized by ${\theta}_i$---the $i$-th component of policy profile parameters $\bm{\theta}$.

We can calculate the gradient $\nabla_{\bm{\theta}} \probs_{\bm{\theta}}(\bm{T}=i|y)$ 
as
\begin{equation}
\begin{aligned}
\label{eq:HMM_gradient_P_zT_y}
&\nabla_{\bm{\theta}} \probs_{\bm{\theta}}(\bm{T} = i |y)\\
= &  P(\bm{T}=i) \nabla_{\bm{\theta}} \frac{ \probs_{\bm{\theta}}(y|\bm{T}=i) }{\probs_{\bm{\theta}}(y)}\\
=  & P(\bm{T}=i)\Big[ \frac{\nabla_{\bm{\theta}}  \probs_{\bm{\theta}}(y|\bm{T}=i)}{\probs_{\bm{\theta}}(y)}  - \frac{ \probs_{\bm{\theta}}(y|\bm{T}=i)}{ \probs_{\bm{\theta}}^2(y)} \nabla_{\bm{\theta}} \probs_{\bm{\theta}}(y) \Big],
\end{aligned}
\end{equation} 
where $\nabla_{\bm{\theta}} \probs_{\bm{\theta}}(y) = \sum_{i \in \mathcal{T}}P(\bm{T}=i) \nabla_{\bm{\theta}}\probs_{\bm{\theta}}(y|\bm{T}=i)$.

To complete the gradient computation, we only need to determine $\nabla_{\bm{\theta}}\probs_{\bm{\theta}}(y|\bm{T}=i)$ for each type of follower. We can utilize the observable operators~\citep{jaeger2000observableoperator} to compute this gradient for the partially observable system induced by follower $i$'s policy.

\paragraph{Observable operators for single-agent \ac{hmm}.} 
Consider a single-agent MDP $M= \langle S, A, P, \mu_0, R \rangle $, a parameterized Markov policy $\pi_\theta$ induces a hidden Markov model
\[
M(\theta) = \langle  S, A, P_\theta, \mu_0, \mathcal{O}, E\rangle
\]
where $S =\{1,\ldots, N\}$, $\mathcal{O}= \{1,\ldots, M\}$,  $P_\theta (s'|s)=\sum_{a\in \mathcal{A}} P(s'|s,a)\pi_\theta(a|s)
$ and   $E: S\rightarrow \dist{\mathcal{O}}$ is an observation function for the leader.    



In this single-agent HMM, let the random variable of state, observation, and control action, at time point $t$ be denoted as $X_t, O_t, A_t$, respectively. Let $\matT^\theta \in \reals^{N \times N}$ be the transposed state transition matrix in the single-agent HMM $M_\theta$ with $$\matT^\theta_{i,j} = P_\theta(X_{t+1} = i|X_t = j).$$ 

Let $\matO \in \reals^{M \times N}$ be the observation probability matrix with $\matO_{o,j} =E( o|j)$ for each $o\in \mathcal{O}$ and $j\in S$.  
\begin{definition} 
Given the single-agent \ac{hmm} $M_\theta$, for any observation $o$,
the observable operator $\matA^{{\theta}}_{o}$ is a matrix of size $N \times N$ with its $ij$-th entry defined as $$
 \matA_{o}^{{\theta}}[i,j] =  \matT_{i, j}^\theta \matO_{o,j} \ ,$$
which is the probability of transitioning from state $j$ to state $i$ and at the state $j$, an observation $o$ is emitted.
In matrix form, 
\[
\matA_{o}^\theta = \matT^\theta \text{diag}(\matO_{o, 1}, \dots, \matO_{o, N}).
\]
\end{definition}

\begin{proposition}
\label{prop:initial_opacity_probability} 

Given the single-agent \ac{hmm} $M_\theta$, the probability of observing $y$ is \begin{equation}
\label{eq:matrix_operation_s0}
P_{\theta}(y) = \mathbf{1}_{n}^\top \matA_{o_t}^\theta \dots \matA_{o_0}^\theta \mu_{0}.
\end{equation}
where $\mathbf{1}_{N}$ is a vector of size $N$. An the derivative of $P_{\theta}(y)$ with respect to $\theta$ is

\begin{equation}
\nabla_{\theta } P_{\theta }(y ) = \sum_{i = 0}^t \mathbf{1}_N^\top \matA_{o_t}^\theta \dots \nabla_{\bm{\theta}} \matA_{o_i}^\theta \dots \matA_{o_0}^\theta \mu_{0 }.
\end{equation}
\end{proposition}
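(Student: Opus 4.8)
The plan is to establish the closed-form expression for $P_\theta(y)$ first, then differentiate it using the product rule for matrix products. For the first claim, I would argue by induction on the length of the observation prefix. Define the forward vector $\alpha_t \in \reals^N$ by $\alpha_t[j] = P_\theta(O_{0:t} = o_{0:t}, X_{t+1} = j)$ — i.e., the joint probability of having seen the observations $o_0, \dots, o_t$ and being in state $j$ at the next time step. The base case handles $\alpha_0$: conditioning on the initial state $s$ drawn from $\mu_0$, we have $\alpha_0[j] = \sum_s \mu_0(s) E(o_0 \mid s) P_\theta(j \mid s) = \sum_s \matT^\theta_{j,s} \matO_{o_0,s} \mu_0(s) = (\matA^\theta_{o_0} \mu_0)[j]$, directly from the definition of the observable operator $\matA^\theta_{o_0} = \matT^\theta \,\mathrm{diag}(\matO_{o_0,1},\dots,\matO_{o_0,N})$. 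For the inductive step, I would use the Markov property: $\alpha_{t+1}[j] = \sum_k P_\theta(O_{0:t}, X_{t+1}=k) \cdot E(o_{t+1}\mid k) \cdot P_\theta(j \mid k) = \sum_k \matT^\theta_{j,k}\matO_{o_{t+1},k}\alpha_t[k] = (\matA^\theta_{o_{t+1}}\alpha_t)[j]$. Hence $\alpha_t = \matA^\theta_{o_t}\cdots\matA^\theta_{o_0}\mu_0$, and $P_\theta(y) = \sum_j \alpha_t[j] = \mathbf{1}_N^\top \matA^\theta_{o_t}\cdots\matA^\theta_{o_0}\mu_0$, which is equation~\eqref{eq:matrix_operation_s0}.

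For the derivative, the plan is to apply the standard product rule for a product of matrices, each of which (except $\mu_0$ and $\mathbf{1}_N$) depends on $\theta$. Writing $P_\theta(y) = \mathbf{1}_N^\top \left(\prod_{k=t}^{0}\matA^\theta_{o_k}\right)\mu_0$ (product taken in decreasing index order), linearity of differentiation and the Leibniz rule give $\nabla_\theta P_\theta(y) = \sum_{i=0}^t \mathbf{1}_N^\top \matA^\theta_{o_t}\cdots\matA^\theta_{o_{i+1}}\bigl(\nabla_\theta \matA^\theta_{o_i}\bigr)\matA^\theta_{o_{i-1}}\cdots\matA^\theta_{o_0}\mu_0$, which is exactly the claimed formula. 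I would note that each $\nabla_\theta \matA^\theta_{o_i}$ is itself computable from $\nabla_\theta \matT^\theta$ (since $\matO$ is independent of $\theta$), and $\nabla_\theta \matT^\theta$ follows from the definition $\matT^\theta_{j,k} = \sum_a P(j\mid k,a)\pi_\theta(a\mid k)$ together with the softmax form~\eqref{eq:policy-softmax-generic} of $\pi_\theta$, so the gradient is fully explicit.

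The main obstacle — more bookkeeping than conceptual — is being careful with the indexing conventions. The operators are \emph{transposed} transition matrices (so that left-to-right matrix multiplication corresponds to forward time evolution), and one must verify that the observation emitted at a given time is associated with the \emph{source} state of the transition (as the definition $\matA^\theta_{o}[i,j] = \matT^\theta_{i,j}\matO_{o,j}$ dictates) rather than the destination state; this choice determines whether the emission of $o_t$ gets paired with $X_t$ or $X_{t+1}$ in the factorization, and it must be matched consistently in the induction. A secondary subtlety is that $\nabla_\theta$ here denotes differentiation with respect to the full parameter vector $\theta \in \reals^{|S\times A|}$, so each term $\nabla_\theta \matA^\theta_{o_i}$ is a tensor (a matrix of gradients); the product-rule manipulation is still valid entrywise, but I would state this explicitly to avoid ambiguity. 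Everything else is a routine consequence of the chain rule and the Markov property of the induced HMM.
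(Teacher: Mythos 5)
Your proof is correct and follows exactly the route the paper intends: the paper's own ``proof'' is a one-line appeal to the observable-operator property of \cite{jaeger2000observableoperator} for equation~\eqref{eq:matrix_operation_s0} and to the product (Leibniz) rule for the derivative, and your forward-recursion induction plus term-by-term differentiation is precisely the fleshed-out version of that sketch. Your care about pairing the emission of $o_t$ with the source state of the transition is consistent with the paper's definition $\matA_o^\theta[i,j]=\matT_{i,j}^\theta\matO_{o,j}$, so no issues there.
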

\begin{proof}
The proof is based on the property of observable operators \citep{jaeger2000observableoperator} and the derivative of a parameterized tensor product.
\end{proof}

Based on the above analysis for a general single-agent HMM $M(\theta)$ and observation function $E$, we can obtain $\nabla_{\theta_i}  \probs_{\bm{\theta}}(y|\bm{T}=i)$ and $\probs_{\bm{\theta}}(y|\bm{T}=i)$ by replacing the single-agent HMM with the follower $i$'s HMM $M_i(\theta_i)$ and the observation with $E_i$, and then compute $\nabla_{\theta_i} P_{\theta_i}(y)$ and $P_{\theta_i}(y)$.

It is observed that   $\nabla_{\theta_j}  \probs_{\bm{\theta}}(y|\bm{T}=i)=\bm{0}$ because the observation process of follower $i$ is not influenced by the follower $j$'s policy, when $i\ne j$. With these computation, we obtain $\nabla_{\bm{\theta}} \probs_{\bm{\theta}} (\bm{T} = i |y)$.

It is noted that, though $\mathcal{O}$ is a finite set of observations, it is combinatorial and may be too large to enumerate. To mitigate this issue, we can employ sample approximations to estimate   $\nabla_{\bm{\theta}} H(\bm{T}|Y;\bm{\theta})$: 
Given $K$ sequences of observations $\{y_1, \dots, y_K\}$, we can approximate $ H(\bm{T}|Y;\bm{\theta}) $ by
\begin{equation}
\label{eq:HMM_approx_entropy}
 H(\bm{T}|Y;\bm{\theta}) \approx - \frac{1}{K} \sum_{k=1}^K \sum_{i \in \mathcal{T}} \probs_{\bm{\theta}}(i| y_k) \log \probs_{\bm{\theta}}(i | y_k),
\end{equation}
and approximate $\nabla_{\bm{\theta}} H(\bm{T}|Y;\bm{\theta})$ by
\begin{equation}
\begin{aligned}
\label{eq:HMM_approx_gradient_entropy}
&\nabla_{\bm{\theta}} H(\bm{T}|Y;\bm{\theta}) \\
&\approx - \frac{1}{K} \sum_{k=1}^K \sum_{i \in \{0,1\}} \big[ \log \probs_{\bm{\theta}}(i | y_k) \nabla_{\bm{\theta}} \probs_{\bm{\theta}}(i| y_k) \\
& + \probs_{\bm{\theta}}(i| y_k) \log \probs_{\bm{\theta}}(i | y_k) \nabla_{\bm{\theta}} \log \probs_{\bm{\theta}}(y_k) + \frac{\nabla_{\bm{\theta}} \probs_{\bm{\theta}}(i | y_k)}{\log 2} \big].
\end{aligned}
\end{equation}

\paragraph{Computing $D\bm{\Qstar}(\bm{R}(x) )$.}
The derivative $D \bm{\Qstar}(\bm{R}(x) )$    is a block diagonal matrix. Each block along the main diagonal corresponds to $D \Qstar(R_i(x))$ for each follower $i\in \mathcal{T}$. 

The following proposition (proven in \cite{ma2024adaptive}) allows us to compute $D \Qstar(R_i(x))$ given each follower's MDP $M_i$ with reward $R_i(x)$.

    \begin{proposition}
    \label{thm:dtheta_dr}
    Consider an infinite-horizon \ac{mdp} $M=(S,A, P, s_0, \gamma, R)$ with discounting. Let $Q^\star(R) \colon S\times A\rightarrow \reals $ be the optimal state-action value function of the entropy-regularized  \ac{mdp} under the reward function $R$. For any $(s,a), (\tilde s, \tilde a)\in S\times A$, it holds that
    \begin{equation}
    \label{eq:dtheta_dr}
    \frac{\partial Q^\star_{s,a}}{\partial R_{\tilde{s},\tilde{a}}} = \mathbf{1}_{(\tilde{s},\tilde{a})}(s,a)+\gamma \mathbb{E}_{s'\sim P(\cdot\mid s,a)} \sum_{a'\in A} \pi_{Q^\star(R)}(s',a') \frac{\partial Q^\star_{s',a'} } {\partial R_{\tilde{s},\tilde{a}}},
    \end{equation}
    where
    \begin{equation}
    \mathbf{1}_{(\tilde{s},\tilde{a})}(s,a) =
        \begin{cases}
            1 & \text{if } (s,a) = (\tilde{s},\tilde{a}), \\
            0 & \text{otherwise.}
        \end{cases}
    \end{equation}
    \end{proposition}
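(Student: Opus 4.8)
The plan is to prove equation~\eqref{eq:dtheta_dr} by differentiating the entropy-regularized Bellman fixed-point equation~\eqref{eq:soft-max} with respect to the reward entry $R_{\tilde s,\tilde a}$ and then recognizing that the derivative of the soft-max (log-sum-exp) operator reproduces exactly the entropy-regularized optimal policy weights. Concretely, I would first recall that $Q^\star_{s,a}(R) = R(s,a) + \gamma\,\Expect_{s'\sim P(\cdot\mid s,a)} V^\star(s',R)$ and that $V^\star(s,R) = \tau\log\sum_{a\in A}\exp(Q^\star_{s,a}(R)/\tau)$, so that $Q^\star$ is the unique fixed point of a contraction (the soft Bellman operator is a $\gamma$-contraction in the sup-norm), which guarantees it is well-defined and, by the implicit function theorem applied to the smooth log-sum-exp map, differentiable in $R$.

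Next I would carry out the differentiation. Applying $\partial/\partial R_{\tilde s,\tilde a}$ to $Q^\star_{s,a} = R_{s,a} + \gamma\,\Expect_{s'\sim P(\cdot\mid s,a)}\big[\tau\log\sum_{a'}\exp(Q^\star_{s',a'}/\tau)\big]$ gives the indicator term $\mathbf{1}_{(\tilde s,\tilde a)}(s,a)$ from the explicit $R_{s,a}$ dependence, plus $\gamma\,\Expect_{s'\sim P(\cdot\mid s,a)}$ of the derivative of the log-sum-exp term. The key computation is that
\[
\frac{\partial}{\partial R_{\tilde s,\tilde a}}\Big(\tau\log\sum_{a'}\exp(Q^\star_{s',a'}/\tau)\Big)
= \sum_{a'\in A}\frac{\exp(Q^\star_{s',a'}/\tau)}{\sum_{a''}\exp(Q^\star_{s',a''}/\tau)}\cdot\frac{\partial Q^\star_{s',a'}}{\partial R_{\tilde s,\tilde a}}
= \sum_{a'\in A}\pi_{Q^\star(R)}(s',a')\,\frac{\partial Q^\star_{s',a'}}{\partial R_{\tilde s,\tilde a}},
\]
where the second equality is precisely the definition~\eqref{eq:policy-softmax} of the entropy-regularized optimal policy. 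Substituting this back yields equation~\eqref{eq:dtheta_dr} directly.

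I would close by noting two points that keep the argument rigorous. First, interchanging $\partial/\partial R_{\tilde s,\tilde a}$ with the expectation $\Expect_{s'\sim P(\cdot\mid s,a)}$ (a finite sum over $s'\in S$) is immediate since $S$ is finite. Second, one should justify that $Q^\star$ really is differentiable before writing its derivative: this follows because the soft Bellman operator $\mathcal{T}$ satisfies $Q^\star = \mathcal{T}(Q^\star, R)$, $\mathcal{T}$ is continuously differentiable jointly in $(Q,R)$, and $I - \partial_Q\mathcal{T}$ is invertible (its operator norm bound $\gamma<1$ makes $\partial_Q\mathcal{T}$ a contraction, so the Neumann series converges), whence the implicit function theorem applies. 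The main obstacle — and it is a mild one — is this differentiability/well-posedness preamble rather than the calculation itself; since the proposition is attributed to \cite{ma2024adaptive}, it would suffice here to sketch the fixed-point differentiation and cite that source for the full technical justification.
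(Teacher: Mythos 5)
Your proposal is correct, and it is essentially the argument behind the result: the paper itself gives no proof of Proposition~\ref{thm:dtheta_dr}, deferring entirely to \cite{ma2024adaptive}, and the cited derivation is exactly the fixed-point differentiation you carry out --- differentiate $Q^\star_{s,a}=R_{s,a}+\gamma\,\Expect_{s'\sim P(\cdot\mid s,a)}V^\star(s')$ with $V^\star(s')=\tau\log\sum_{a'}\exp(Q^\star_{s',a'}/\tau)$, and observe that the log-sum-exp gradient reproduces the softmax policy weights of equation~\eqref{eq:policy-softmax}. Your added remarks on well-posedness (contraction of the soft Bellman operator, invertibility of $I-\partial_Q\mathcal{T}$ via the Neumann series, hence applicability of the implicit function theorem) supply the differentiability justification that the paper leaves implicit, so the proposal is if anything more complete than what appears here.
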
 

For any given $(s',a') \in S \times A$, equation~\eqref{eq:dtheta_dr} is in the form of the standard Bellman equation for state-action value function. This implies that the partial derivative can be computed using any method for solving the Bellman equation~\citep{ma2024adaptive}.

 Lastly, using the gradient computation methods described above, we can compute the total gradient $DJ(x)$ and employ a gradient-descent algorithm to find a stationary point of  the objective function. 

\begin{remark}
Though the method places no restriction on the side payment $x$---the leader can modify the follower's reward at any state-action pair. In practice, the side payment decision variables $x$ may be constrained to be within a set $\mathcal{X}$ of feasible allocation. In addition, let $\lambda$ be the number of decision variables in $X$, 
the time complexity of calculating $D \Qstar(R (x)) \cdot DR (x) $ in the single-agent case is   $\mathcal{O}(\lambda |S|^2  |A|^2   + |S|^3|A|^2)$.  Thus, the leader can select a subset of state-action pairs for side payment allocation to reduce the number $\lambda$ and subsequently the computations.
\end{remark}

\section{Experiments}

\begin{figure}[t]
\centering
\includegraphics[width=0.4\textwidth]{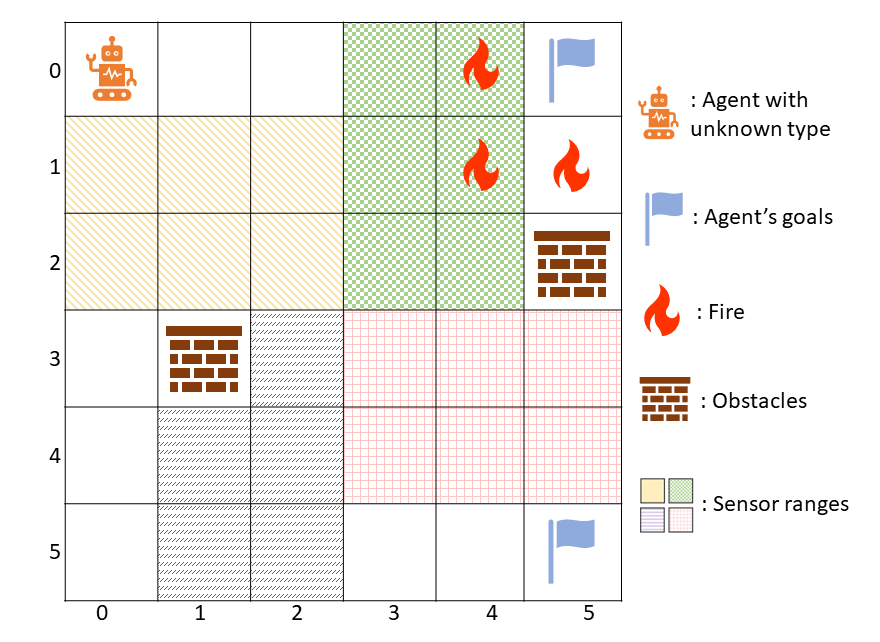}
\caption{Fire rescue task in grid world environment.}
\label{fig:environment_1}
\end{figure}

\begin{example}[Fire rescue task]
\label{ex:1}
We demonstrate the effectiveness of the proposed policy-based algorithm in a fire rescue task (Figure~\ref{fig:environment_1}) \footnote{The code is available on \url{https://drive.google.com/drive/folders/1a33YT3gDoJxXNxU36dHfi3ae40E_BIc_?usp=sharing}}. The scenario involves two types of rescue robots (followers) that start from the same initial position, \((0,0)\), and navigate toward specific goal locations (the targets to be rescued) marked by flags. The robots can move in four compass directions: north, south, east, and west. However, their movement is stochastic. When a robot moves in a chosen direction, there is a probability \(\alpha\) that it will also shift  in one of the two adjacent directions. For example, if a robot moves east, there is a probability \(\alpha\) that it will also move  north and another probability \(\alpha\) that it will move  south. If a robot encounters an obstacle or boundary, it remains in place. 

Type 1 robots are equipped with superior hardware compared to Type 2 robots, resulting in lower movement uncertainty ($\alpha = 0.05$ for Type 1 vs. $ \alpha = 0.1$ for Type 2). Additionally, Type 1 robots have better fireproof armor. Thus, if a Type 1 robot enters a fire cell, only  a small negative reward of $-0.1$  is incurred. In contrast, Type 2 robots, which are more vulnerable to fire, suffer a much larger negative reward of $-20$. Both robot types receive a positive reward of $0.1$ upon reaching goal (flag) states, which are designated as sink states. Since the goal states are sink states, robots continue to accumulate discounted rewards after reaching them. The discount factor is set to $\gamma_i = 0.9$ for all i when computing the optimal policy for each robot type. 

\begin{figure}[t]
\centering
\includegraphics[width=0.4\textwidth]{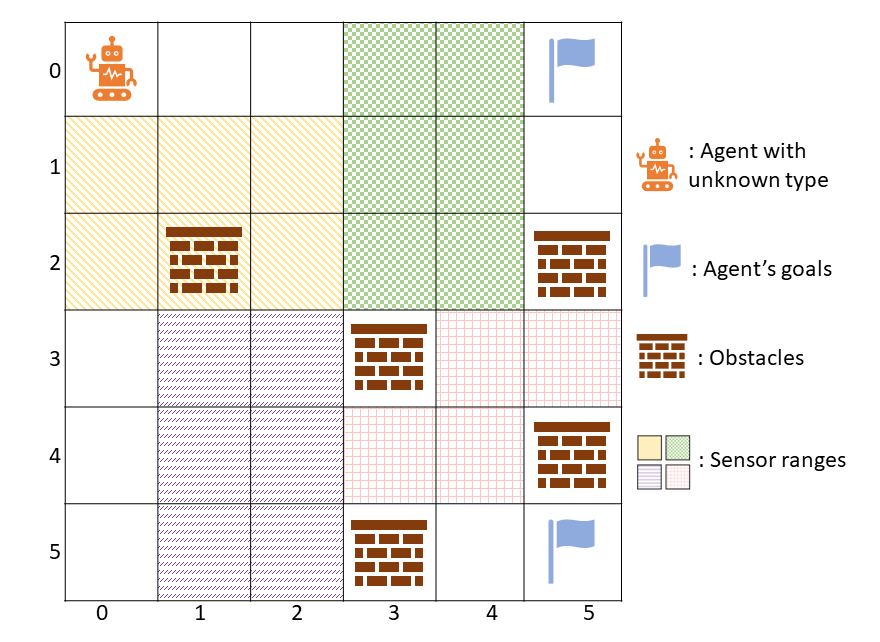}
\caption{Behavior comparison task in grid world environment.}
\label{fig:environment_2}
\end{figure}

For perception, the environment is equipped with four sensors \(\{1, 2, 3, 4\}\), each with a distinct range. If a robot is within   sensor $i$'s range, the observer receives observation $i$, for $i \in \{ 1, 2, 3, 4\}$ with a $90$\% probability and a null observation ("n") with a $10$\% probability due to false negatives. If the robot is outside the sensor's range, the observer only receives a null observation.
\end{example}

In this environment, the leader can only allocate side payments in the cell \((5,0)\). When setting the side payment, we can assign a positive value to only one action at the target state, i.e., \( x((5,0), a) > 0 \) for a single action \( a \in A \), while setting \( x((5,0), a) = 0 \) for all other actions. Since the targets are sink states, the agent will always choose the action with the positive reward when computing the optimal policy. This approach will reduce the effective dimension of side payment. The cost function is defined as \( h(x) = \beta \|x\|_1 \) where \( \beta = 0.05 \), represents the cost of side payments in this specific cell. In optimization, we set the time horizon to \( T = 12 \) and use \( K = 2000 \) sampled trajectories for approximation of gradients. 

\begin{figure*}[ht!]
\centering
\begin{subfigure}{.33\textwidth}
  \centering
  \includegraphics[width=\linewidth]{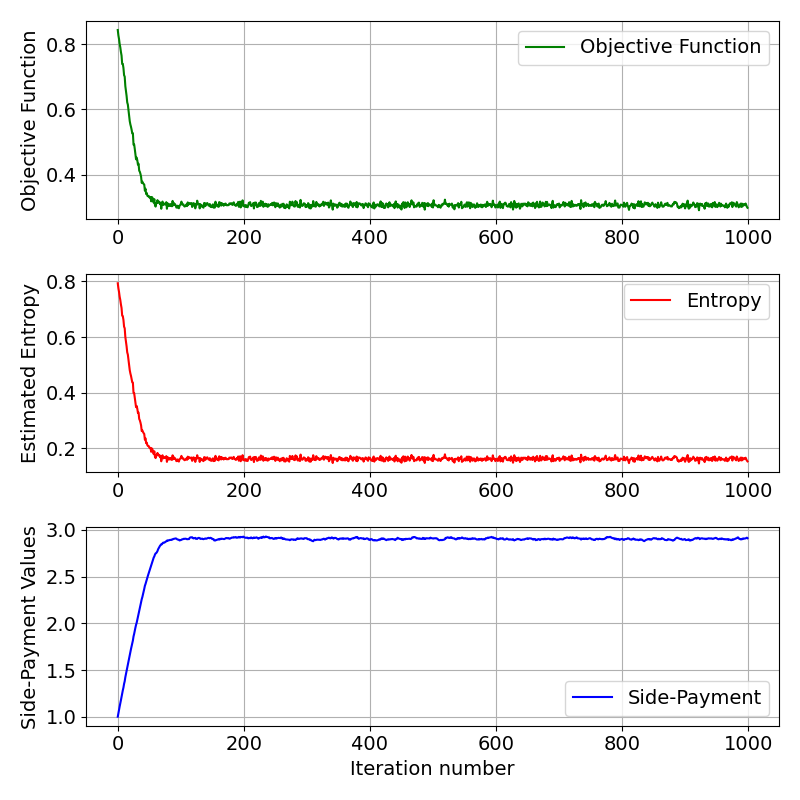}
  \caption{The result of the fire rescue.}
  \label{fig:result_1}
\end{subfigure}%
\begin{subfigure}{.33\textwidth}
  \centering
  \includegraphics[width=\linewidth]{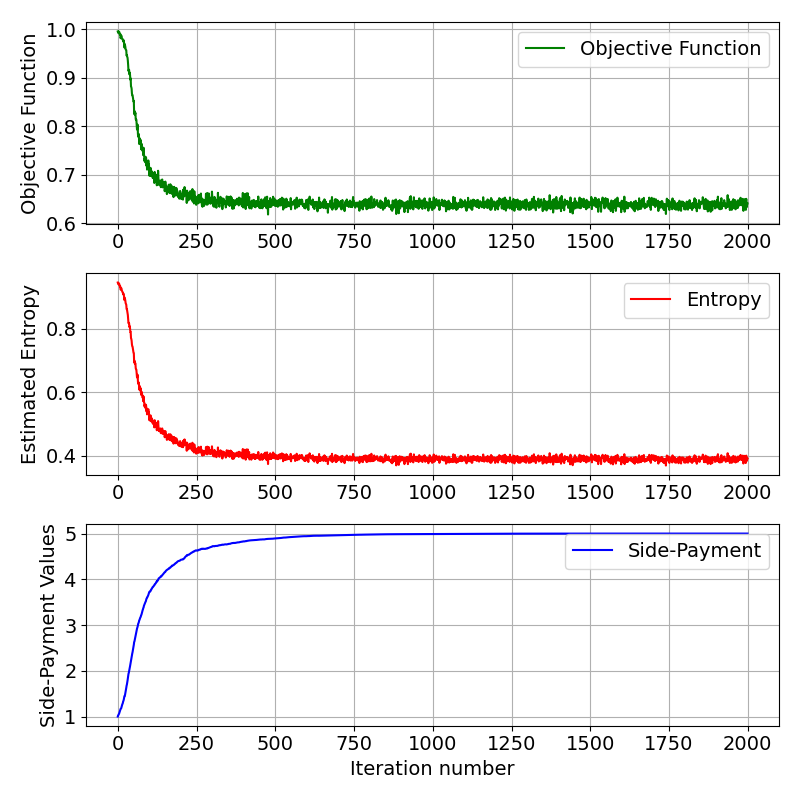}
  \caption{The result of the behavior comparison.}
  \label{fig:result_2}
\end{subfigure}
\begin{subfigure}{.33\textwidth}
  \centering
  \includegraphics[width=\linewidth]{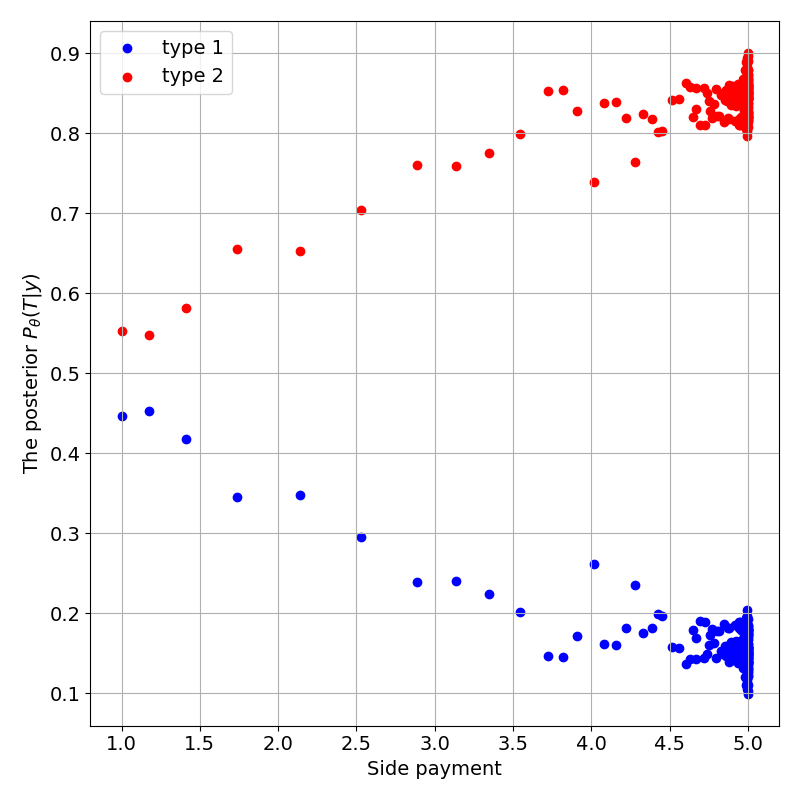}
  \caption{Inference during the optimization.}
  \label{fig:posterior}
\end{subfigure}
\caption{The results of experiments.}
\label{fig:experiment}
\end{figure*}

Figure~\ref{fig:result_1} illustrates that as the algorithm converges. {The objective function $J(x)$ converges to $0.299$. The conditional entropy \( H(\bm{T} | Y, \mathcal{M}(\bm{\pi}^\star(x_t)) \) approaches $0.153$. And the side payment reaches $0.291$. The trend of objective function and entropy are similar because we set the weight $\beta = 0.05$. Then the cost function $h(x)$ has a smaller impact on the objective function.} Since the conditional entropy is close to $0$—the minimal value of entropy—the observations provide significant information about the robot's type.  

The conditional entropy is inversely proportional to the side payment, a behavior driven by the environmental configuration. When the initial side payment is $1$, both types of robots move toward the flag at cell \((5,5)\) due to the presence of fires around cell \((5,0)\). Hence, they are indistinguishable and the entropy is high as $0.793$. However, as the side payment increases, the Type 1 robot, which has a low fire penalty, chooses to move to \((5,0)\), while the Type 2 robot, facing a much higher fire penalty, still moves to \((5,5)\). This divergence in movement makes it easier to distinguish between the two robot types, thereby reducing entropy. Clearly, increasing the side payment to be higher than $0.291$ will improve the inference accuracy, but at too much cost for the leader. Our solution achieves a balance between the inference accuracy and side payment cost.

\begin{example}[Behavior comparison task]

We use the example shown in Figure~\ref{fig:environment_2} to further illustrate the application of our algorithm in active inference. This environment is similar to Example~\ref{ex:1}, but with additional obstacles.

There are also two types of robots (followers) with significantly different behaviors. The dynamic noise is set to \(\alpha = 0.05\) for Type 1 and \(\alpha = 0.25\) for Type 2, meaning the Type 2 robot moves more randomly than the Type 1 robot. Both robots receive a positive reward of \(0.1\) upon reaching their target (flags). Additionally, they incur a continuous negative reward of \(-0.01\) in other cells, encouraging them to move toward their targets more quickly.
The targets are sink states as well, which means that the robots can continuously collect rewards here.
\end{example}

In this environment, the leader can only allocate side payment in the cell \((5,5)\). All other settings remain the same as in Example~\ref{ex:1}.

Figure~\ref{fig:result_2} shows that the conditional entropy eventually converges to $0.390$, while the initial conditional entropy is $0.945$. The side payment converges to $5$. As the side payment increases, the robots' behavior changes from indistinguishable to distinguishable given partial observations. It is observed that under an initial side payment $x((5,5), a)=1$ for a certain action $a \in A$, the cell \((5,5)\) is not attractive to either robot, as both can obtain sufficient rewards from cell \((5,0)\). However, as the side payment increases, the cell \((5,5)\) becomes more appealing.  
In this scenario, the Type 1 robot can ensure reaching \((5,5)\) with a high probability, whereas the Type 2 robot cannot because it is more likely to run into obstacles with its higher stochasticity. As a result, increasing the reward at $(5,5)$ makes it easier for the leader to distinguish between the two types of robots.

The following graph (Figure~\ref{fig:posterior}) shows the results of inference during the optimization process. We define the estimator of posterior distribution \( \hat{\probs}_{\bm \theta}(\bm{T} = i |Y ) \triangleq \Expect_Y[\probs_{\bm{\theta}}(\bm{T} = i |Y)] = \sum_y \probs_{\bm{\theta}}(y) \probs(\bm{T} = i |Y = y) \) given different side payments \( x \), which are sampled during the optimization. Given $y_k,k=1,\ldots, M$ of sampled observation sequences, the estimator can be approximated by

\begin{equation}
\hat{\probs}_{\bm{\theta}}(\bm{T} = i |Y) \approx 
\frac{1}{M} \sum_{k = 1}^M \probs_{\bm{\theta}}(\bm{T} = i |Y = y_k).
\end{equation}
The estimator \( \hat{\probs}_{\bm{\theta}}(\bm{T} = i |Y) \) represents the probability that the robot is of type \( i \). If the posterior is close to $1$, the leader can confidently infer that the true type is \( i \). In this experiment, we fix the true robot type to be Type 2.  
When the side payment is 1, the posteriors \( \hat{\probs}_{\bm{\theta}}(\bm{T} = 1 |Y) = 0.447 \) and \( \hat{\probs}_{\bm{\theta}}(\bm{T} = 2 |Y) = 0.553 \) are close to 0.5, making it difficult to determine which type is the true type. As the side payment increases and converges to the optimal value of $0.5$, the posterior \( \hat{\probs}_{\bm{\theta}}(\bm{T} = 1 |Y)\) decreases to about $0.15$, while \( \hat{\probs}_{\bm{\theta}}(\bm{T} = 2 |Y) \) increases to about $0.85$. At this point, the leader can confidently infer that the true type is Type 2.

These experimental results from the stochastic grid world examples validate the accuracy and effectiveness of our proposed methods. These two case studies demonstrated the use of incentive design to distinguish users with different capabilities/dynamics and/or different reward functions.  

\section{Conclusion}
In this paper, we introduced a class of active inference problems through  incentive design and developed a bi-level optimization method to solve a  locally optimal solution for the leader's incentive policy. Our approach uses  conditional entropy to measure uncertainty about followers' types from the leader's partial observations and consider an incentive design with an objective to balance information gain and incentive cost. 

Future work could explore several directions: 1) Integrating active inference with adaptive incentive design for personalized systems; 2) Extending the proposed method from model-based to model-free setting, where the leader only has access to collected data from different followers in the past interactions. This extension will be crucial for many practical applications.
\appendix

\bibliographystyle{named}
\bibliography{ijcai25, chongyang_refs}

\end{document}